\title{Probability matrices, non-negative rank, and parameterizations of mixture models}
\author{Enrico Carlini \\ Department of Mathematics, Politecnico di Torino \\
Corso Duca degli Abruzzi, 24 \\
10129 Turin, Italy \\
\tt{enrico.carlini@polito.it}
\and Fabio Rapallo \\ Department DISTA, University of Eastern Piedmont \\
Viale Teresa Michel, 11 \\
15121 Alessandria, Italy \\ \tt{fabio.rapallo@mfn.unipmn.it} }
\begin{document}

\newtheorem{thm}{Theorem}[section]
\newtheorem{definition}[thm]{Definition}
\newtheorem{proposition}[thm]{Proposition}
\newtheorem{lemma}[thm]{Lemma}
\newtheorem{corollary}[thm]{Corollary}
\newtheorem{remark}[thm]{Remark}
\newtheorem{example}[thm]{Example}

\maketitle                 

\begin{abstract}
In this paper we parameterize non-negative matrices of sum one and
rank at most two. More precisely, we give a family of
parameterizations using the least possible number of parameters.
We also show how these parameterizations relate to a class of
statistical models, known in Probability and Statistics as mixture
models for contingency tables.
\bigskip

{\it Key words:} parametrization; determinantal varieties;
non-negative rank; contingency tables.

{\it AMS 2000:} 15A51, 62H17.

\end{abstract}

\section{Introduction} \label{introsect}

The study of non-negative matrices with fixed rank has recently
attracted a great deal of work both theoretical and applied. One
of the main problems in this field is the so-called ``non-negative
matrix factorization problem'', which can be shortly stated as
follows. Given a non-negative matrix $A \in {\mathbb R}_+^{I
\times J}$ (where ${\mathbb R}_+$ denotes the set of real
non-negative numbers), one has to find an approximation of $A$ as
a linear combination of $k$ dyadic products $c_ir_i^t$, where the
$c_i$'s and $r_i$'s are vectors with non-negative entries, i.e.
$c_i \in {\mathbb R}_+^J$ and $r_i \in {\mathbb R}_+^I$.

The rank of a matrix gives the numbers of rank one matrices, i.e.
dyadic products, needed to write the matrix as a sum of dyads. But
there are no non-negative conditions on the vectors of the dyads.
The non-negativity constraints make the situation more complex and
one has to work with the {\it non-negative rank} of the matrix
(see e.g. \cite{cohen|rothblum:93}), which is in general bigger
than the ordinary rank. Therefore, it is not possible in general
to decompose a rank $k$ matrix into the sum of exactly $k$ dyadic
products $c_ir_i^t$ where $c_i$ and $r_i$ are non-negative
vectors. We will review the main results about non-negative rank
in the next section.

In recent literature, a number of results and algorithms for
non-negative matrix factorization have been published, see e.g.
\cite{lee|seung:00}. In \cite{catral|han|neumann|plemmons:04}
special techniques for symmetric tables are presented, while in
\cite{ho|vandooren:08} the case of fixed row and column sums is
analyzed, with applications to stochastic matrices. In
\cite{finesso|spreij:06}, the authors discuss some connections
between the factorization problem and the notion of
$I$-divergence, which has a well known statistical role, see e.g.
\cite{dacunhacastelle|duflo:86} and \cite{pardo:05}.

From the point of view of Probability, non-negative matrices are a
natural tool in the analysis of two-way contingency tables. A
two-way contingency table $A=(a_{i,j})$ collects data from two
categorical random variables measured on $n$ subjects. Let us
suppose that the first variable $X$ has $I$ levels $1, \ldots, I$
and the second variable $Y$ has $J$ levels $1, \ldots ,J$. The
element $a_{i,j}$ is the count of subjects with $X=i$ and $Y=j$.
Therefore, $A$ is an $I \times J$ matrix with non-negative integer
entries.

A joint probability distribution for the pair $(X,Y)$ is a {\it
probability matrix} with $I$ rows and $J$ columns $P=(p_{i,j})$ of
non-negative real numbers such that $\sum_{i,j} p_{i,j} =1$. A
statistical model ${\mathcal M}$ for $I \times J$ contingency
tables is a set of probability distributions, i.e. a subset of the
simplex
\begin{equation}
\Delta = \left\{ P=(p_{i,j}) \ : \ p_{i,j} \geq 0, \ \sum_{i,j}
p_{i,j} =1  \right\}\subset\mathbb{R}_{+}^{I\times J} \, .
\end{equation}
One of the most widely used models for two-way contingency tables
is the independence model, see e.g. \cite{agresti:02}. It is
defined through the vanishing of all $2 \times 2$ minors of the
generic matrix, i.e. by the equations
\begin{equation}
p_{i,j}p_{l,h} - p_{i,h}p_{l,j} = 0 \ \ \ \ 1 \leq i < l \leq I, \
1 \leq j < h \leq J \, ;
\end{equation}
thus, the points of the independence model are rank $1$ matrices.

Recent developments in Statistics have shown the relevance of
probability models whose points are matrices of rank at most $2$.
One example in this direction, based on a special symmetric
matrix, is the so-called ``$100$ Swiss francs problem'', see
\cite{sturmfels:08}. This problem comes from Computational
Biology, where it is useful to analyze the alignment of DNA
sequences, see \cite{pachter|sturmfels:05}. Although this
particular problem has been solved in \cite{gao|jiang|zhu:08}, the
study of fixed-rank probability matrices is mainly unexplored.

As the sum of $k$ matrices with rank $1$ has rank at most $k$, the
matrices which can be written as the sum of $k$ dyadic products
encode the notion of mixture of $k$ distributions from
independence models.

In Probability and Statistics it is interesting not only to study
the approximation problem mentioned above, but also to have a
parametrization of the models. While for rank $1$ matrices the
parametrization is easy, see e.g. \cite{agresti:02}, the problem
becomes difficult in the case of higher non-negative ranks.
Already for $k=2$, in \cite{fienberg|hersh|rinaldo|zhou:10} it is
shown that the model is not identifiable, meaning that different
parameter values lead to the same probability distribution.

This issue is a well known problem in statistical modelling called
``parameter redundancy'', see \cite{catchpole|morgan:97} and
\cite{catchpole|morgan|freeman:98}. The detection of parameter
redundancy has a major relevance in maximum likelihood estimation,
where the parameters of a statistical models are estimated through
the maximization of a real-valued function called ``likelihood
function'', see e.g. \cite{agresti:02}. In the papers mentioned
above, the authors propose a purely analytical technique to detect
the parameter redundancy of a statistical model, by computing the
rank of the Jacobian matrix of a specific function. The redundancy
is checked through Symbolic Algebra computations and the problem
of redundancy is overcome via additional linear constraints on the
parameters.

In this paper, we propose a method which uses linear algebra to
make the maximization problem simpler by reducing the number of
parameters involved. Then the usual analytic techniques can be
used in a more effective way.

The paper is organized as follows: in Section \ref{theorec} we
introduce some definition an we recall some basic facts. In
Section \ref{expdesign} we study the problem of parameters
redundancy form a geometric point of view. In Section
\ref{appsection} we show a possible application of our results.

\section{Definition and background material} \label{theorec}

Let $P=(p_{i,j})$ be a probability matrix with $I$ rows and $J$
columns, i.e. $P\in\Delta$. In order to simplify the formulae, let
us suppose that $I \leq J$. Let $k$ be an integer, $1 \leq k \leq
I$.

\begin{definition} \label{simplestdef}
A probability matrix $P$ is the mixture of $k$ independence models
if it can be written in the form:
\begin{equation} \label{mixturedef}
P= \alpha_1 c_1r_1^t + \ldots + \alpha_k c_kr_k^t
\end{equation}
where for all $h=1, \ldots , k$
\begin{itemize}
\item $\alpha_h \in \mathbb{R}_+$ and $\sum_h \alpha_h = 1$;

\item $r_h\in\mathbb{R}_+^J$ and $\sum_i r_h(i) = 1$;

\item $c_h\in\mathbb{R}_+^I$ and $\sum_j c_h(j) = 1$.
\end{itemize}
\end{definition}

Definition \ref{simplestdef} contains a simple parametric form of
the probability distribution which has an intuitive probabilistic
counterpart. Let us suppose that we have $k$ pairs of dice, say
$(D_{1,r},D_{1,c}), \ldots , (D_{k,r},D_{k,c})$, where $D_{h,r}$
has $J$ facets and distribution $r_h$ and $D_{h,c}$ has $I$ facets
and distribution $c_h$. We choose a pair of dice with probability
distribution $\alpha=(\alpha_1, \ldots, \alpha_k)$ and we roll the
selected pair of dice. The resulting distribution is just a
mixture distribution as in Eq. \eqref{mixturedef}.

As a Linear Algebra counterpart, the definition above is strictly
related with the notion of non-negative rank of a matrix. For more
on non-negative rank see, e.g., \cite{cohen|rothblum:93}. We
recall here some useful facts.

\begin{definition}  \label{posrank}
Given a matrix $P$ with real non-negative elements, the
non-negative rank of $P$ is the smallest number of non-negative
column vectors $v_1, \ldots, v_k$ of $P$ such that each column of
$P$ has a representation as a linear combination of $v_1, \ldots,
v_k$ with non-negative coefficients. The non-negative rank of a
matrix $P$ is denoted with $\mathrm{rk}_+(P)$.
\end{definition}

The definition above has an equivalent formulation in terms of
linear combinations of row vectors. In the following proposition
we summarize the main properties of the non-negative rank. The
reader can refer to \cite{cohen|rothblum:93} for proofs and
further details. The non-negative rank is of special relevance for
Probability and Statistics. In fact, $\mathrm{rk}_+(A)$ is the
number of dyadic products of non-negative vectors that we can use
to represent $A$.

\begin{proposition} \label{mainproperties}
Let $P$, $Q$ be two non-negative matrices with $I$ rows and $J$
columns.
\begin{itemize}
\item[(a)] $\mathrm{rk}(P) \leq \mathrm{rk}_+(P) \leq
\min\{I,J\}$;

\item[(b)] $\mathrm{rk}_+(P) = \mathrm{rk}_+(P^t)$;

\item[(c)] $\mathrm{rk}_+(P+Q) \leq \mathrm{rk}_+(P) +
\mathrm{rk}_+(Q)$.
\end{itemize}
Moreover, if $P$ has dimensions $I \times K$ and $Q$ has
dimensions $K \times J$, then $\mathrm{rk}_+(PQ) \leq
\min\{\mathrm{rk}_+(P),\mathrm{rk}_+(Q)\}$.
\end{proposition}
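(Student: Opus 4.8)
The plan is to reduce every claim to the \emph{non-negative matrix factorization} characterization of $\mathrm{rk}_+$, which I regard as the central tool. First I would restate Definition \ref{posrank} in factored form: collecting the non-negative vectors $v_1, \ldots, v_k$ into the columns of a matrix $V \in \mathbb{R}_+^{I \times k}$ and the non-negative coefficients into a matrix $W \in \mathbb{R}_+^{k \times J}$, the requirement that every column of $P$ be a non-negative combination of $v_1, \ldots, v_k$ is precisely the statement $P = VW$. Hence $\mathrm{rk}_+(P)$ is the least inner dimension $k$ admitting a factorization $P = VW$ with $V, W$ non-negative. Once this equivalence is established, each item follows by manipulating such factorizations, and no minimality argument beyond ``a factorization of size $k$ exists, therefore $\mathrm{rk}_+ \le k$'' is needed, since all the assertions are upper bounds.

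For part (a), the lower bound $\mathrm{rk}(P) \le \mathrm{rk}_+(P)$ is immediate from $P = VW$, because the ordinary rank of a product never exceeds the inner dimension, so $\mathrm{rk}(P) \le k$ for the minimal factorization. For the upper bound I would exhibit an explicit factorization of size $I$: since $P$ is non-negative, each of its columns is a non-negative combination of the standard basis vectors $e_1, \ldots, e_I$, with coefficients equal to the entries themselves, giving $\mathrm{rk}_+(P) \le I$. Applying the same argument to $P^t$ and invoking part (b) then yields $\mathrm{rk}_+(P) \le J$, hence $\mathrm{rk}_+(P) \le \min\{I,J\}$.

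For part (b), transposing $P = VW$ gives $P^t = W^t V^t$, a non-negative factorization of $P^t$ with the same inner dimension, and the reverse direction is symmetric; thus the two minimal inner dimensions coincide and $\mathrm{rk}_+(P) = \mathrm{rk}_+(P^t)$. For part (c), writing $P = V_1 W_1$ and $Q = V_2 W_2$ at their minimal sizes $p = \mathrm{rk}_+(P)$ and $q = \mathrm{rk}_+(Q)$, I would form the block factorization $P + Q = V W$, where $V = [\,V_1 \mid V_2\,] \in \mathbb{R}_+^{I \times (p+q)}$ and $W \in \mathbb{R}_+^{(p+q) \times J}$ is the vertical stacking of $W_1$ over $W_2$; this has inner dimension $p+q$, so $\mathrm{rk}_+(P+Q) \le p+q$. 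For the final product inequality, with $P = V_P W_P$ minimal I would regroup $PQ = V_P (W_P Q)$, noting that $W_P Q$ is non-negative as a product of non-negative matrices; this gives $\mathrm{rk}_+(PQ) \le \mathrm{rk}_+(P)$, and the symmetric regrouping $PQ = (P V_Q) W_Q$ from $Q = V_Q W_Q$ gives $\mathrm{rk}_+(PQ) \le \mathrm{rk}_+(Q)$, whence the minimum.

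The only genuinely delicate point is the row/column symmetry asserted in part (b), since Definition \ref{posrank} is phrased asymmetrically in terms of columns; I would therefore prove (b) before using it in the upper bound of (a), so that the reasoning is not circular. Beyond this ordering, everything is routine bookkeeping with non-negative factorizations.
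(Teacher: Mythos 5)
Your proof is correct, and it is worth noting that the paper itself offers no proof of this proposition at all: it simply defers to the reference of Cohen and Rothblum (1993). So your self-contained argument is genuinely more than what the paper provides, and it is the standard one: recasting Definition \ref{posrank} as the existence of a factorization $P=VW$ with $V\in\mathbb{R}_+^{I\times k}$, $W\in\mathbb{R}_+^{k\times J}$, after which (a) follows from $\mathrm{rk}(VW)\le k$ and the trivial factorization $P=I_I\,P$, (b) from transposing factorizations, (c) from the block factorization $P+Q=\left[\,V_1\mid V_2\,\right]\left(\begin{smallmatrix}W_1\\ W_2\end{smallmatrix}\right)$, and the product bound from the regroupings $PQ=V_P(W_PQ)=(PV_Q)W_Q$. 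All of these steps are sound, and your care about proving (b) before using it in (a) is reasonable (though avoidable: $P=P\,I_J$ directly gives $\mathrm{rk}_+(P)\le J$ without invoking transpose symmetry). One subtlety you silently resolved, correctly: the paper's Definition \ref{posrank} literally asks for column vectors ``of $P$,'' which, read strictly, defines the \emph{restricted} non-negative rank — a quantity that can exceed $\mathrm{rk}_+$ and for which transpose invariance (b) actually fails. Your factored restatement, allowing arbitrary non-negative vectors $v_1,\ldots,v_k$, is the intended Cohen--Rothblum notion and is the one for which the proposition holds; it would be worth flagging that discrepancy explicitly rather than passing over it.
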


Items $(b)-(d)$ in Proposition \ref{mainproperties} show that the
non-negative rank has properties similar to the classical rank. In
general, the rank and non-negative rank are different, as shown by
the following matrix
\begin{equation*}
\begin{pmatrix} 1 & 0 & 1 & 0 \\
                          1 & 0 & 0 & 1 \\
                          0 & 1 & 1 & 0 \\
                          0 & 1 & 0 & 1
                          \end{pmatrix}
\end{equation*}
which has rank $3$ but non-negative rank $4$.

Among the cases where the rank and the non-negative rank coincide,
there are the following special classes of matrices
\cite{cohen|rothblum:93}.

\begin{proposition} \label{luckycases}
Let $P$ be a non-negative matrix with $I$ rows and $J$ columns.
\begin{itemize}
\item[(a)] If $\mathrm{rk}(P) \leq 2$ then $\mathrm{rk}_+(P) =
\mathrm{rk}(P)$;

\item[(b)] If $P$ is diagonal, then $\mathrm{rk}_+(P) =
\mathrm{rk}(P)$.
\end{itemize}
\end{proposition}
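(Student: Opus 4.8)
The plan is to prove the two parts separately, since they assert equality in the bound $\mathrm{rk}(P) \le \mathrm{rk}_+(P)$ from Proposition \ref{mainproperties}(a) under different hypotheses. In both cases the inequality $\mathrm{rk}(P) \le \mathrm{rk}_+(P)$ is already available, so the real content is the reverse inequality $\mathrm{rk}_+(P) \le \mathrm{rk}(P)$, which amounts to exhibiting a non-negative rank-one decomposition using exactly $\mathrm{rk}(P)$ dyadic products.

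For part (b), the diagonal case, I would argue directly and constructively. If $P$ is diagonal with $r$ nonzero diagonal entries, then its ordinary rank is exactly $r$. Writing $P = \sum_{i} p_{i,i}\, e_i e_i^t$, where the sum ranges over indices with $p_{i,i} \neq 0$ and $e_i$ denotes a standard basis vector, gives a decomposition into $r$ dyadic products of non-negative vectors. Hence $\mathrm{rk}_+(P) \le r = \mathrm{rk}(P)$, and combined with part (a) of Proposition \ref{mainproperties} this forces equality. This part is essentially immediate and presents no obstacle.

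For part (a), the rank-at-most-two case, I expect the work to concentrate. The cases $\mathrm{rk}(P) = 0$ and $\mathrm{rk}(P) = 1$ are trivial: the zero matrix has non-negative rank zero, and a rank-one non-negative matrix factors as $c r^t$ with $c, r$ non-negative (one can read off the signs from a single nonzero row and column). So the heart of the matter is $\mathrm{rk}(P) = 2$, where I must produce a decomposition $P = c_1 r_1^t + c_2 r_2^t$ with all four vectors non-negative. The natural approach is to start from any rank-two factorization $P = C R$, where $C$ is $I \times 2$ and $R$ is $2 \times J$ (whose columns and rows respectively span the column and row spaces of $P$), and then to look for an invertible $2 \times 2$ change of basis $M$ so that $C M$ and $M^{-1} R$ both have non-negative entries. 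Geometrically, since all columns of $P$ lie in the non-negative orthant and span a two-dimensional cone, one wants to choose the two new basis vectors of the column space to be the two extreme rays of this planar cone; the corresponding coefficients in the $R$ factor are then automatically non-negative.

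The main obstacle will be establishing that such a non-negative ``cone-generating'' basis actually exists and yields non-negativity simultaneously in both factors. The key geometric fact is that the intersection of a two-dimensional subspace with the non-negative orthant is a convex polyhedral cone generated by at most two extreme rays; I would make this precise, select non-negative generators $c_1, c_2$ of that cone lying in the column space of $P$, and verify that every column of $P$, being a non-negative combination of the columns and hence lying in the cone, expands with non-negative coefficients against $c_1, c_2$. This produces the non-negative row vectors $r_1^t, r_2^t$. Dimension two is exactly what makes this work, since a planar cone has a clean two-generator description; the analogous statement fails in higher rank, which is precisely why equality of rank and non-negative rank is special to ranks at most two. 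I would close by assembling these pieces to conclude $\mathrm{rk}_+(P) \le 2 = \mathrm{rk}(P)$, giving equality.
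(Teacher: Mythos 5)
Your proposal is correct, but on the essential part (a) it takes a genuinely different route from the paper's. The paper never reproves part (b) (the proposition is quoted from Cohen and Rothblum); its self-contained argument for part (a) is Lemma \ref{nonnegativelincomb}, stated for probability matrices (an inessential restriction, by rescaling): fix two linearly independent columns $C_1,C_2$, write every column as $C_i=t_iC_1+s_iC_2$, and look at the finitely generated planar cone spanned by the coordinate pairs $(t_i,s_i)$; an elementary angle argument (the angle between the extremal rays must be $<\pi$, since $=\pi$ or $>\pi$ would force a vanishing non-negative combination and hence proportionality of $C_1,C_2$) shows that the two extremal pairs correspond to two \emph{actual columns} $\bar C,\tilde C$ of $P$, of which every column is a non-negative combination. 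You instead take the two extreme rays of the cone $V\cap\mathbb{R}_+^I$, where $V$ is the column space --- the standard Cohen--Rothblum-style proof. This is sound, granting the facts you say you would make precise (the intersection is polyhedral by Minkowski--Weyl and pointed because the orthant is, hence two-generated), and it buys generality: it works verbatim for arbitrary non-negative matrices and also disposes of part (b). What it does not give, and what the paper's proof does give, is that the two generators can be chosen among the columns of $P$ itself; your $c_1,c_2$ are merely non-negative vectors lying in the column space. That stronger conclusion is what Definition \ref{posrank} literally asks for (``column vectors \ldots of $P$''), and, more importantly, it is exactly what the paper uses afterwards: the corollary following Lemma \ref{inverse} covers $\mathcal{M}$ by the images of the maps $f_{j_1,j_2}$ precisely because every $P\in\mathcal{M}$ admits two of its own columns $C_{j_1},C_{j_2}$ as the generating factors. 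So your argument proves the proposition as usually stated, but if it were substituted for Lemma \ref{nonnegativelincomb}, the parametrization argument of Section \ref{expdesign} would acquire a gap.
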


In what follows we will heavily use part (a) of Proposition
\ref{luckycases}. Hence, for the convenience of the reader, we
produce a self contained proof of this fact for probability
matrices.

\begin{lemma}\label{nonnegativelincomb}
Let $P$ be a probability matrix. If $\mathrm{rk}(P)\leq 2$, then
$\mathrm{rk}_+(P)=\mathrm{rk}(P)$.
\end{lemma}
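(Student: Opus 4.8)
The plan is to establish the nontrivial inequality $\mathrm{rk}_+(P) \leq \mathrm{rk}(P)$, since the reverse inequality $\mathrm{rk}(P) \leq \mathrm{rk}_+(P)$ is already recorded in part (a) of Proposition \ref{mainproperties}. Because $P$ is a probability matrix its entries sum to one, so $P \neq 0$ and $\mathrm{rk}(P) \in \{1,2\}$. It therefore suffices to exhibit, in each case, non-negative column vectors $v_1, \ldots, v_{\mathrm{rk}(P)}$ such that every column of $P$ is a non-negative linear combination of them, which is exactly the content of Definition \ref{posrank}.

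The case $\mathrm{rk}(P)=1$ is immediate. Writing $P = uv^t$ and picking an entry $p_{i,j} = u_i v_j > 0$, after rescaling so that $u_i > 0$ one finds $v_j = p_{i,j}/u_i > 0$, hence $u_k = p_{k,j}/v_j \geq 0$ and $v_l = p_{i,l}/u_i \geq 0$ for all $k,l$. Thus $u$ and $v$ may be taken non-negative and $\mathrm{rk}_+(P)=1$.

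The heart of the argument is the case $\mathrm{rk}(P)=2$. First I would let $V \subseteq \mathbb{R}^I$ be the two-dimensional subspace spanned by the columns of $P$ and consider the convex cone $C = V \cap \mathbb{R}_+^I$, which contains every column of $P$ by construction. Since $P$ has two linearly independent non-negative columns, $C$ is genuinely two-dimensional, and since $C$ is contained in the orthant $\mathbb{R}_+^I$ (which contains no line through the origin) it is pointed. The key geometric fact is then that a two-dimensional pointed convex cone is generated by exactly two extreme rays, say $\mathbb{R}_+ v_1$ and $\mathbb{R}_+ v_2$ with $v_1, v_2 \in \mathbb{R}_+^I$. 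Consequently each column of $P$, lying in $C$, is a non-negative combination of $v_1$ and $v_2$, which yields $\mathrm{rk}_+(P) \leq 2 = \mathrm{rk}(P)$.

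The main obstacle is this structural claim about the two extreme rays, together with the verification that $C$ is two-dimensional and pointed. I would justify pointedness and dimensionality as above, and for the generation by two rays argue concretely rather than invoking general polyhedral theory: choosing linear coordinates on the plane $V$, one represents each vector of $C$ by an angle, and takes $v_1, v_2$ to be the columns realizing the extreme angles, so that every other column lies in the angular sector between them and is therefore a non-negative combination of the two extremes. This elementary angular argument keeps the proof self-contained, as intended for a probability matrix.
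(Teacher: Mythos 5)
Your proof is correct, and it rests on the same geometric fact as the paper's: the columns of $P$ lie in a two-dimensional pointed convex cone, and such a cone is a sector generated by two extreme rays. The execution, however, differs in a way worth noting. The paper fixes two independent columns $C_1,C_2$, writes every column as $C_i=t_iC_1+s_iC_2$, and runs the angular argument on the cone spanned by the coefficient pairs $(t_i,s_i)$ in $\mathbb{R}^2$; in that coordinate system pointedness is not automatic, so the paper must exclude the angles $\phi=\pi$ and $\phi>\pi$ by contradiction, each time using the non-negativity of $P$ to force $C_1$ and $C_2$ to be proportional. You instead work with the cone $C=V\cap\mathbb{R}_+^I$ in the ambient space, where pointedness is free because the orthant contains no line, so that case analysis disappears entirely; this is a genuine simplification. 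What the paper's formulation buys in return is that its two generators $\bar C,\tilde C$ are actual \emph{columns} of $P$ --- a conclusion stronger than $\mathrm{rk}_+(P)=\mathrm{rk}(P)$, and one the paper relies on later: the Corollary following Lemma \ref{inverse} cites this lemma precisely to write $P$ in terms of two of its own columns. Your primary argument, via the extreme rays of $V\cap\mathbb{R}_+^I$, does not produce columns of $P$ as generators (and note that Definition \ref{posrank}, as literally phrased, asks the generating vectors to be columns of $P$); but your closing angular variant, which takes the two columns realizing the extreme angles, does exactly this and coincides with the paper's choice. So nothing is missing, though it would be preferable to promote that column-based version to the main line of the proof.
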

\begin{proof}
If $\mathrm{rk}(P)=1$ then the proof is trivial; thus we will
assume $\mathrm{rk}(P)=2$. Denote with $C_i,i=1\ldots,J$ the
columns of $P$. We will show that there exist two columns, say
$\bar C$ and $\tilde C$, such that $C_i=t_i\bar C+s_i\tilde C$ for
all $i$ and the coefficients $t_i$'s and $s_i$'s are non-negative.

Clearly, as $P$ has rank at most two, all columns are linear
combinations of two fixed ones. Without loss of generality, we may
assume that $C_1$ and $C_2$ are linearly independent. Thus for any
other column we have $C_i=t_iC_1+s_iC_2$. If all the pairs
$(t_i,s_i)$ are non-negative we are done. Otherwise, consider in
the plane $\mathbb{R}^2$ the rays spanned by the pairs $(t_i,s_i)$
and let $(\bar t,\bar s)$ and $(\tilde t,\tilde s)$ be the
extremal rays and denote by $\bar C$ and $\tilde C$ the
corresponding columns. We recall that the extremal rays are the
minimal generators of the convex cone spanned by the the pairs
$(t_i,s_i)$. Now consider the angle $\phi$ between the extremal
rays containing at least one positive semi-axis. If $\phi<\pi$
radiants then we are done by using the addition rule for vectors
in the plane and all the columns are non-negative linear
combinations of $\bar C$ and $\tilde C$. If $\phi=\pi$ radiants we
get the contradiction as $\bar C+\tilde C=0$ and hence $C_1$ and
$C_2$ would be proportional. If $\phi>\pi$ we get again a
contradiction. In fact, a non-negative combination of the extreme
rays would be in the negative quadrant. Hence, a non-negative
linear combination of $\bar C$ and $\tilde C$ would be
non-positive and hence equal to zero being $P$ non-negative. Thus,
$C_1$ and $C_2$ would be proportional again.
\end{proof}

\section{Parameters and parameterizations} \label{expdesign}

Often in Probability and in Statistic models are described using
parameters. This description can be easily expressed in geometric
terms. Given the variety representing the model we look for a
surjective function into it. More precisely, if ${\mathcal M}$ is
the model, a surjective function
$U\subseteq\mathbb{R}^n\longrightarrow {\mathcal M}$ gives a {\it
parametrization} of ${\mathcal M}$. If the function we found is
described by rational functions and its image is dense in the
model, we say that the map is {\it dominant} and we describe the
model up to a measure zero set.

Given a model ${\mathcal M}$ there are two basic questions: Does
there exist a dominant map $\mathbb{R}^n\longrightarrow {\mathcal
M}$? What is the smallest $n$ for which such a map exists?
Answering the first question is a deep and difficult problem in
Geometry called ``the unirationality problem'', see \cite[page
87]{harris:92}. The second question is difficult too, but we can
easily give a bound on $n$ using the dimension of ${\mathcal M}$,
namely we must have $n\geq \dim {\mathcal M}$.

When we have a parametrization of a model ${\mathcal M}$ such that
$n=\dim {\mathcal M}$ we say that the parametrization is {\it
non-redundant}, or that the parameters are non-redundant. It is
not always possible to find a non-redundant parametrization. But,
in some interesting situations, it is possible to decompose the
model ${\mathcal M}$ as union of subvarieties and for each of this
one can find a non-redundant parametrization. We will give
examples of these phenomena in the case of rank $k$ and rank 2
mixture models.

\subsection{A parametrization for the rank $k$ matrices}

Given natural numbers $I\leq J$ we consider the following family
of matrices with rank at most $k$:
\begin{equation*}
\mathcal{M}_k=\left\{ P=(p_{i,j}) \in \mathbb{R}^{I\times J} :
\mathrm{rk}(P)\leq k  \ , \ \sum_{i,j}p_{i,j}=1 \right\} \, .
\end{equation*}

As the elements of $\mathcal{M}_k$ have rank at most $k$, they can
be written as a linear combination of at most $k$ rank one
probability matrices. More precisely, if $P\in\mathcal{M}_k$ then
\begin{equation}\label{parameterization1}
P = \alpha_1 c_1r_1^t + \ldots + \alpha_k c_kr_k^t
\end{equation}
for a choice of scalars $\alpha_i's$ and of column vectors $c_i$'s
and $r_i$'s. Hence, we can represent elements of $\mathcal{M}_k$
using
\begin{equation*}
k(I+J)+k
\end{equation*}
parameters.  In other words, \eqref{parameterization1} gives a
surjective polynomial map
\begin{equation*}
\mathbb{R}^{k(I+J)+k}\longrightarrow \mathcal{M}_k \, .
\end{equation*}

We recall that a map between algebraic varieties, say
$V_1\longrightarrow V_2$, can be a parametrization, only if $\dim
V_1 \geq \dim V_2$. To know whether the parameters we are using
are necessary or redundant, we need to know the dimension of
$\mathcal{M}_k$ and compare it with ${k(I+J)+k}$.

\begin{proposition}\label{dimPROP}
With the notation above, we have
\begin{equation*}
\dim \mathcal{M}_k \leq k(I+J)-k^2-1 \, .
\end{equation*}
\end{proposition}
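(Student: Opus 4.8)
The plan is to recognize $\mathcal{M}_k$ as a hyperplane section of a classical determinantal variety and to read off its dimension from the known dimension of that variety. Concretely, set $\mathcal{D}_k = \{P \in \mathbb{R}^{I \times J} : \mathrm{rk}(P) \leq k\}$, the variety of $I \times J$ matrices of rank at most $k$, and let $H = \{P : \sum_{i,j} p_{i,j} = 1\}$ be the affine hyperplane from the definition. By construction $\mathcal{M}_k = \mathcal{D}_k \cap H$, so everything reduces to controlling $\dim \mathcal{D}_k$ and to checking that the linear constraint genuinely lowers the dimension.

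First I would recall that $\mathcal{D}_k$ is an irreducible variety of dimension $k(I+J-k) = k(I+J) - k^2$; this is the standard computation for generic determinantal varieties (see \cite{harris:92}). For a self-contained derivation in the language of this paper, one can use the surjective ``matrix product'' map $\mu : \mathbb{R}^{I \times k} \times \mathbb{R}^{k \times J} \to \mathcal{D}_k$, $(C, R) \mapsto CR$, whose image is exactly $\mathcal{D}_k$ (any matrix of rank $\leq k$ factors this way, after padding with zero columns and rows). Over a matrix of rank exactly $k$ the fibre of $\mu$ is a single orbit of the change-of-basis action $(C, R) \mapsto (Cg^{-1}, gR)$ for $g \in \mathrm{GL}_k$, and this action is free on pairs with $C$ of full column rank and $R$ of full row rank, so the fibre has dimension $k^2$. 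The theorem on the dimension of the fibres of a dominant map then gives $\dim \mathcal{D}_k = (Ik + kJ) - k^2 = k(I+J) - k^2$.

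Next I would pass from $\mathcal{D}_k$ to $\mathcal{M}_k = \mathcal{D}_k \cap H$. Since $\mathcal{D}_k$ is irreducible and is not contained in $H$ (for instance the zero matrix lies in $\mathcal{D}_k$ but has coordinate sum $0 \neq 1$), the intersection with the hyperplane is a \emph{proper} closed subvariety of $\mathcal{D}_k$, and therefore $\dim \mathcal{M}_k \leq \dim \mathcal{D}_k - 1 = k(I+J) - k^2 - 1$. This is exactly the claimed bound; in fact Krull's principal ideal theorem shows the drop is by precisely one, so equality holds, but only the upper bound is needed here.

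The step I expect to be the main obstacle is establishing $\dim \mathcal{D}_k = k(I+J) - k^2$ cleanly. If one simply cites the dimension of generic determinantal varieties, the remainder is immediate; if one wants to keep the argument elementary in the spirit of Lemma \ref{nonnegativelincomb}, the real work is in justifying that the generic fibre of $\mu$ is exactly a $\mathrm{GL}_k$-orbit of dimension $k^2$ (checking that the action is free on full-rank pairs and that two rank-$k$ factorizations differ by a unique $g \in \mathrm{GL}_k$) and in invoking the fibre-dimension theorem for the dominant map $\mu$. By contrast, the hyperplane-section step is routine once irreducibility of $\mathcal{D}_k$ and the non-containment $\mathcal{D}_k \not\subseteq H$ are in hand.
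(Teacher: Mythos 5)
Your proposal is correct and follows essentially the same route as the paper: the paper likewise quotes the dimension $k(I+J)-k^2$ of the determinantal variety of rank-at-most-$k$ matrices from \cite{harris:92} and then observes that imposing $\sum_{i,j}p_{i,j}=1$ and passing to real matrices gives the bound. Your extra details (the $\mathrm{GL}_k$-fibre computation for $\dim\mathcal{D}_k$ and the irreducibility/non-containment argument for the drop under the hyperplane section) merely make explicit what the paper leaves as a citation and a one-line remark.
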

\begin{proof}
The dimension of the family of complex $I\times J$ matrices of
rank at most $k$ is well known to be $ k(I+J)-k^2$, see
\cite{harris:92}. Imposing that the sum of al the entries is $1$
and taking real matrices give the bound.
\end{proof}

Proposition \ref{dimPROP} shows that the parametrization
\eqref{parameterization1} is redundant and we are using more
parameters than the best possible value. Actually, it is not
possible to use $k(r+s)-k^2-1$ parameters to get all the elements
of $\mathcal{M}_k$. In the case of $k=2$ we will show how to
decompose $\mathcal{M}_k$ in open subsets which can each be
described using the optimal number of parameters.

\subsection{Non-redundant parameterizations of probability models for $k=2$}

In this section we only deal with matrices of rank at most two.
Hence we fix $k=2$ and we set
\begin{equation*}
\mathcal{M}=\mathcal{M}^+_2=\left\{P \in {\mathbb R}_+^{I \times
J} \ , \ \mathrm{rk}_+(P) = 2 \right\} \cap \Delta \, .
\end{equation*}
In this situation, $\dim \mathcal{M}\leq 2I+2J-5$ and we will use
this number of parameters to describe $\mathcal{M}$, hence finding
a non-redundant parametrization. Set $D=2I+2J-5$. We will
construct maps
\begin{equation*}
f_{j_1,j_2}:U_{j_1,j_2}\subset\mathbb{R}^D\longrightarrow\mathcal{M}
\end{equation*}
for $1\leq j_1 < j_2 \leq J$, with the property that the union of
the images of the $f_{j_1,j_2}$ is the whole $\mathcal{M}$, i.e.
$\bigcup\mathrm{Im}(f_{j_1,j_2})=\mathcal{M}$.

Each map is constructed in such a way that
$\mathrm{Im}(f_{j_1,j_2})$ is contained in the open subset of the
matrices with the $j_1$-th and the $j_2$-th columns linearly
independent. We give an explicit description only for $f_{1,2}$,
the other cases being completely analogous. We set
\begin{equation*}
f_{1,2}(a_1,\ldots,a_{I-1},b_3,\ldots,b_{J},c_1,\ldots,c_{I-1},d_3,\ldots,d_{J},\alpha)=
\end{equation*}
\begin{equation*}
=\alpha \left(\begin{array}{c} a_1 \\ a_2 \\ \vdots \\ a_{I-1} \\
1-\sum a_i\end{array}\right) \left(\begin{array}{ccccc}1-\sum b_i
& 0 & b_3 & \ldots & b_J\end{array}\right)+
\end{equation*}
\begin{equation*}
+(1-\alpha)\left(\begin{array}{c} c_1 \\ c_2 \\ \vdots \\
c_{I-1}
\\ 1-\sum c_i\end{array}\right) \left(\begin{array}{ccccc}0 & 1-\sum
d_i &  d_3 & \ldots & d_J\end{array}\right),
\end{equation*}
defined on
\begin{equation*}
U'_{1,2}=\left\lbrace
(a_1,\ldots,a_{I-1},b_3,\ldots,b_{J},c_1,\ldots,c_{I-1},d_3,\ldots,d_{J},\alpha)\in\mathbb{R}^
D : \right.
\end{equation*}
\begin{equation*}
\left. 0\leq a_i,b_i,c_i,d_i,\alpha\leq 1 \ \mbox{ and } \ 0\leq
\sum a_i,\sum b_i,\sum c_i,\sum d_i\leq 1 \right\rbrace.
\end{equation*}

To define $f_{j_1,j_2}$ one simply moves element in the row
vectors. In the first row vector the $1-\sum b_i$ element is moved
in position $j_1$ and the $0$ is moved in position $j_2$;
similarly for the second row vector.

\begin{remark} With standard computations one can easily check that
\begin{equation*}
\mathrm{Im}(f_{j_1,j_2}) \subset \mathcal{M}
\end{equation*}
for all $j_1$ and $j_2$, $j_1<j_2$.
\end{remark}

Now we analyze the functions $f_{j_1,j_2}$ in order to derive some
useful properties. We  work with $f_{1,2}$ and all the results
trivially extend to the other functions.

\begin{lemma}\label{inverse}
Let $P\in\mathcal{M}$ be the following matrix
\begin{equation*}
P=\left(
\begin{array}{cccccc}
x_1 & y_ 1 & \ldots & t_ix_1+s_iy_1 & \ldots & t_Jx_1+s_Jy_1\\ \\
\vdots & \vdots & \vdots & \vdots & \vdots\\ \\
x_I & y_I & \ldots & t_ix_I+s_iy_I & \ldots & t_Jx_I+s_Jy_I
\end{array}
\right)
\end{equation*}

where the coefficients $x_i,y_i,s_i$ and $t_i$ are non-negative.

If the first two columns of $P$ are non-zero, we set
\begin{equation*}
a_i={\frac{x_i} {\sum x_i}},c_i={\frac {y_i} {\sum
y_i}},b_i={\frac {t_i} {1+\sum t_i}},d_i={\frac {s_i} {1+\sum
s_i}}
\end{equation*}
and also $\alpha=(\sum t_i+1)\sum x_i=1-(\sum s_i+1)\sum y_i$.

If $\sum y_i=0$, we set
\begin{equation*}
a_i={\frac {x_i} {\sum x_i}},b_i={\frac{t_i} {1+\sum t_i}}
\end{equation*}
and also $\alpha=1$, and $c_i=d_i=0$ for all $i$.

If $\sum x_i=0$, we set
\begin{equation*}
c_i={\frac{y_i} {\sum y_i}},d_i={\frac{s_i} {1+\sum s_i}}
\end{equation*}
and also $\alpha=0$ and $a_i=b_i=0$ for all $i$.

If we set
$P'=(a_1,\ldots,a_{I-1},b_3,\ldots,b_{J},c_1,\ldots,c_{I-1},d_3,\ldots,d_{J},\alpha)$,
then $P'\in U'_{1,2}$ and
\begin{equation*}
f_{1,2}(P')=P \, .
\end{equation*}
\end{lemma}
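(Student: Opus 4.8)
The plan is to verify the identity $f_{1,2}(P')=P$ by direct substitution, comparing the two matrices column by column, and then to check that $P'$ lies in $U'_{1,2}$. Write $X=\sum_i x_i$, $Y=\sum_i y_i$, $T=\sum_{j\geq 3} t_j$ and $S=\sum_{j\geq 3} s_j$. Recalling the shape of $f_{1,2}$, its value is the rank-two matrix $\alpha\,\mathbf{c}_1\mathbf{r}_1^t+(1-\alpha)\,\mathbf{c}_2\mathbf{r}_2^t$, where $\mathbf{c}_1=(a_1,\ldots,a_{I-1},1-\sum a_i)^t$, $\mathbf{r}_1=(1-\sum b_i,0,b_3,\ldots,b_J)$, and $\mathbf{c}_2,\mathbf{r}_2$ are the analogous vectors in the $c_i,d_i$ parameters. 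Because $\mathbf{r}_1$ has a zero in position $2$ and $\mathbf{r}_2$ a zero in position $1$, the first column of the output comes only from the first dyad, the second column only from the second, and each column $j\geq 3$ receives the contribution $\alpha b_j\mathbf{c}_1+(1-\alpha)d_j\mathbf{c}_2$.

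First I would treat the generic case in which both $X$ and $Y$ are nonzero. Substituting $a_i=x_i/X$ makes the last entry of $\mathbf{c}_1$ equal to $1-\sum_{i<I} x_i/X = x_I/X$, so that $\mathbf{c}_1=\mathbf{x}/X$; likewise $\mathbf{c}_2=\mathbf{y}/Y$. Substituting $b_j=t_j/(1+T)$ gives $1-\sum b_i = 1/(1+T)$, and similarly $1-\sum d_i=1/(1+S)$. Hence the first column of $f_{1,2}(P')$ equals $\alpha\,\mathbf{x}/((1+T)X)$, which matches the first column $\mathbf{x}$ of $P$ exactly when $\alpha=(1+T)X$; the second column forces $1-\alpha=(1+S)Y$. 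These are precisely the two prescriptions for $\alpha$ in the statement.

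The key point is that the two expressions for $\alpha$ are consistent, i.e. $(1+T)X+(1+S)Y=1$, and this is where the probability constraint enters. Summing all entries of $P$ gives $\sum_{i,j} p_{i,j}=X+Y+\sum_{j\geq 3}(t_jX+s_jY)=X(1+T)+Y(1+S)$, which equals $1$ because $P\in\Delta$. With $\alpha$ so determined, each column $j\geq 3$ of the output becomes $\alpha b_j\mathbf{c}_1+(1-\alpha)d_j\mathbf{c}_2 = t_j\mathbf{x}+s_j\mathbf{y}$, matching $P$. It then remains to check $P'\in U'_{1,2}$: all parameters are non-negative since $x_i,y_i,t_j,s_j\geq 0$; the partial sums satisfy $\sum a_i=(X-x_I)/X\leq 1$, $\sum b_i=T/(1+T)<1$, and likewise for the $c_i,d_i$; and $\alpha=(1+T)X\geq 0$ together with $\alpha=1-(1+S)Y\leq 1$ confines $\alpha$ to $[0,1]$.

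Finally I would dispose of the degenerate cases. If $Y=0$ then $\mathbf{y}=0$, so every column of $P$ is a multiple of $\mathbf{x}$ and the sum condition reduces to $(1+T)X=1$; setting $\alpha=1$ and $c_i=d_i=0$ collapses $f_{1,2}(P')$ to the single dyad $\mathbf{c}_1\mathbf{r}_1^t$, which reproduces $P$ by the same computation as above. The case $X=0$ is symmetric, with $\alpha=0$. I expect the only real subtlety to be the consistency identity of the third paragraph: recognizing that the normalization $\sum_{i,j} p_{i,j}=1$ is exactly what reconciles the two formulas for $\alpha$ and forces every column to agree. Everything else is routine substitution and inequality checking.
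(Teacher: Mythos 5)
Your proposal is correct and takes essentially the same approach as the paper: the paper's proof simply states that $P'\in U'_{1,2}$ follows from the conditions on the entries of $P$, that $f_{1,2}(P')=P$ is a straightforward computation, and that the two expressions for $\alpha$ coincide because $P$ has sum one. Your column-by-column substitution is exactly that computation spelled out, and your identity $(1+T)X+(1+S)Y=1$ is precisely the paper's observation that the sum-one condition reconciles the two formulas for $\alpha$.
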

\begin{proof}
The definition of $P'$ and the condition on the entries of $P$
yield that $P' \in U'_{1,2}$. A straightforward computation shows
that $f_{1,2}(P')=P$. The two expressions for the parameter
$\alpha$ coincide as $P$ is a matrix with sum one.
\end{proof}

Finally we can show that the maps $f_{j_1,j_2}$ give a
parametrization of $\mathcal{M}$.
\begin{corollary}
The variety $\mathcal{M}$ is covered by the images of the
functions $f_{j_1,j_2}$, more precisely
$\bigcup\mathrm{Im}(f_{j_1,j_2})=\mathcal{M}$.
\end{corollary}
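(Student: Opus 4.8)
The plan is to prove the two inclusions $\bigcup \mathrm{Im}(f_{j_1,j_2}) \subseteq \mathcal{M}$ and $\mathcal{M} \subseteq \bigcup \mathrm{Im}(f_{j_1,j_2})$ separately. The first inclusion is already dispatched by the Remark preceding the statement, which asserts $\mathrm{Im}(f_{j_1,j_2}) \subseteq \mathcal{M}$ for every pair $j_1 < j_2$; taking the union over all such pairs gives the forward containment immediately. So the entire content of the corollary lies in the reverse inclusion: showing that every $P \in \mathcal{M}$ lies in the image of at least one $f_{j_1,j_2}$.

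For the reverse inclusion, I would start from an arbitrary $P \in \mathcal{M}$, so $\mathrm{rk}_+(P) = 2$. By Proposition \ref{luckycases}(a) (equivalently Lemma \ref{nonnegativelincomb}), this means $\mathrm{rk}(P) = 2$, so $P$ has exactly two linearly independent columns. First I would select indices $j_1 < j_2$ such that columns $C_{j_1}$ and $C_{j_2}$ are linearly independent — this is possible precisely because the rank is $2$. By the symmetry noted in the text (``the other cases being completely analogous''), it suffices to treat the case $j_1 = 1$, $j_2 = 2$, i.e. assume the first two columns are independent. Writing $C_1 = (x_1,\ldots,x_I)^t$ and $C_2 = (y_1,\ldots,y_I)^t$, every remaining column is a linear combination $C_i = t_i C_1 + s_i C_2$; the crucial point is that Lemma \ref{nonnegativelincomb}, applied with $\bar C = C_1$ and $\tilde C = C_2$ as the (extremal) generating columns, guarantees the coefficients $t_i, s_i$ can be taken \emph{non-negative}. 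This places $P$ exactly in the form displayed in the hypothesis of Lemma \ref{inverse}.

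Once $P$ is in that normal form with non-negative $x_i, y_i, s_i, t_i$, I would simply invoke Lemma \ref{inverse} directly: it produces an explicit preimage $P' \in U'_{1,2}$ with $f_{1,2}(P') = P$, handling the boundary cases $\sum x_i = 0$ and $\sum y_i = 0$ separately. This shows $P \in \mathrm{Im}(f_{1,2}) \subseteq \bigcup \mathrm{Im}(f_{j_1,j_2})$, completing the reverse inclusion and hence the proof.

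The main obstacle is the non-negativity of the combination coefficients: a priori, writing a column of a rank-$2$ matrix as $t_i C_1 + s_i C_2$ gives no sign control, and Lemma \ref{inverse} genuinely requires $t_i, s_i \geq 0$. The resolution is that we are not forced to use $C_1, C_2$ themselves — Lemma \ref{nonnegativelincomb} shows that by choosing $\bar C$ and $\tilde C$ to be the columns spanning the \emph{extremal rays} of the convex cone generated by the coefficient pairs, \emph{every} column becomes a non-negative combination of these two, the angle argument ruling out the degenerate configurations $\phi \geq \pi$. So the careful step is to apply $f_{j_1,j_2}$ with $j_1, j_2$ indexing the extremal columns rather than any two independent ones; with that choice the hypotheses of Lemma \ref{inverse} are met and the rest is the routine verification already packaged in that lemma.
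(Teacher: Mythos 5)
Your proof is correct and follows essentially the same route as the paper: invoke Lemma \ref{nonnegativelincomb} to write $P$ in the normal form of Lemma \ref{inverse} with respect to the two \emph{extremal} columns $C_{j_1}, C_{j_2}$, then apply Lemma \ref{inverse} to exhibit a preimage in $U'_{j_1,j_2}$. Your final paragraph correctly isolates the one subtlety the paper leaves implicit --- that the pair $(j_1,j_2)$ must index the extremal columns, not an arbitrary pair of independent ones, to get non-negative coefficients $t_i, s_i$.
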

\begin{proof}
Let $P\in\mathcal{M}$, by Lemma \ref{nonnegativelincomb} we know
that $P$ can be written as in the statement of Lemma \ref{inverse}
for some columns $C_{j_1}$ and $C_{j_2}$ and hence
$P\in\mathrm{Im}(f_{j_1,j_2})$.
\end{proof}

\section{An application}\label{appsection}

It is often interesting to find maxima and minima of a function
over a variety. As an example consider the well known likelihood
function. We will use the parametrization we found in the previous
sections to propose a strategy to study extremal points on
$\mathcal{M}$.  The advantage of this approach is that we are
going to study functions involving the least possible number of
variables as the parametrization we found is non-redundant.

\begin{remark}\label{interior2}
Given a function $F:\mathcal{M}\longrightarrow\mathbb{R}$ we
consider the composite functions $F\circ f_{j_1,j_2}$. Consider a
point $P=f_{j_1,j_2}(P')\in\mathcal{M}$ such that $P$ is in the
interior of $\mathrm{Im}(f_{j_1,j_2})$. Then $P$ is a
maximum/minimum for $F$ if and only if $P'$ is a maximum/minimum
for $F\circ f_{j_1,j_2}$.
\end{remark}

Using Remark \ref{interior2} we can apply the usual gradient and
Hessian matrix approach to detect extremal points of $F$ lying in
the interior of one of the $\mathrm{Im}(f_{j_1,j_2})$. Hence it
useful to have the following:

\begin{lemma}\label{interior}
If $P'$ is in the interior of $U'_{j_1,j_2}$ then
$f_{j_1,j_2}(P')$ is in the interior of
$\mathrm{Im}(f_{j_1,j_2})$.
\end{lemma}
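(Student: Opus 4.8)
The plan is to show, using the explicit inverse supplied by Lemma \ref{inverse}, that $f_{1,2}$ carries a full neighbourhood (in $\mathcal{M}$) of an interior image point into $\mathrm{Im}(f_{1,2})$; as in the rest of this section it suffices to treat $f_{1,2}$. Fix $P'$ in the interior of $U'_{1,2}$ and set $P=f_{1,2}(P')$. The first step is to note that interiority makes every defining inequality strict: $\alpha\in(0,1)$, each of $\sum a_i,\sum b_i,\sum c_i,\sum d_i$ lies in $(0,1)$, and each individual coordinate lies in $(0,1)$. From this I read off the facts about $P$ that I need. Writing $C_1,\ldots,C_J$ for the columns of $P$, inspection of the defining formula gives $C_1=\alpha(1-\sum b_i)\,u$ and $C_2=(1-\alpha)(1-\sum d_i)\,v$, where $u=(a_1,\ldots,1-\sum a_i)^t$ and $v=(c_1,\ldots,1-\sum c_i)^t$ are probability vectors, while for $j\geq 3$ one has $C_j=t_jC_1+s_jC_2$ with $t_j=b_j/(1-\sum b_i)$ and $s_j=d_j/(1-\sum d_i)$. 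Strictness then forces $t_j,s_j>0$ and $C_1,C_2\neq 0$.

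Next I would record that $C_1$ and $C_2$ are linearly independent. If $u$ and $v$ were proportional, then by the displayed relations every column of $P$ would be a multiple of $u$, so $P$ would have rank $1$, contradicting $P\in\mathcal{M}$. Hence $u,v$, and therefore $C_1,C_2$, are independent.

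The heart of the argument is a perturbation step. Let $Q\in\mathcal{M}$ be close to $P$. Since independence of the first two columns is an open condition, $Q$ still has $C_1(Q),C_2(Q)$ independent, so each column of $Q$ is uniquely $C_i(Q)=\tau_iC_1(Q)+\sigma_iC_2(Q)$; by Cramer's rule the coefficients $\tau_i,\sigma_i$ are rational, hence continuous, functions of the entries of $Q$. At $Q=P$ they equal the numbers $t_i,s_i$ above, which are strictly positive, so by continuity $\tau_i,\sigma_i$ stay positive and $C_1(Q),C_2(Q)$ stay non-zero for all $Q$ in a sufficiently small neighbourhood of $P$ in $\mathcal{M}$. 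Thus every such $Q$ is a matrix of exactly the shape required in the hypothesis of Lemma \ref{inverse}.

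It then remains only to invoke Lemma \ref{inverse}: for each $Q$ in that neighbourhood it produces $Q'\in U'_{1,2}$ with $f_{1,2}(Q')=Q$, so $Q\in\mathrm{Im}(f_{1,2})$. Hence an entire neighbourhood of $P$ in $\mathcal{M}$ is contained in $\mathrm{Im}(f_{1,2})$, which is precisely the assertion that $P=f_{1,2}(P')$ is an interior point of $\mathrm{Im}(f_{1,2})$. I expect the only delicate point to be this perturbation step: one must check that the combination coefficients computed at $P$ are \emph{strictly} positive rather than merely non-negative, since it is exactly this strictness that survives a small perturbation and keeps nearby members of $\mathcal{M}$ within the non-negative-combination hypothesis of Lemma \ref{inverse}.
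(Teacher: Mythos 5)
Your proof is correct and follows essentially the same route as the paper's: write $P=f_{1,2}(P')$ in the form of Lemma \ref{inverse} with strictly positive coefficients $t_i,s_i$ (forced by interiority of $P'$), note that these coefficients are obtained for nearby matrices $Q$ by solving linear systems and hence depend continuously on the entries of $Q$, so they stay strictly positive in a neighborhood, and then invoke Lemma \ref{inverse} to produce a preimage of each such $Q$. Your explicit check that $C_1$ and $C_2$ are linearly independent (needed to solve for the coefficients uniquely and continuously) is a point the paper's proof leaves implicit, but it is a refinement of the same argument rather than a different one.
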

\begin{proof}
We produce a proof for $j_1=1$ and $j_2=2$ but a completely
analogous argument works in the general situation. Given $P'$ we
compute $P=f_{1,2}(P')$ and thus we write $P$ as in the statement
of Lemma \ref{inverse}. Moreover, as $P'$ is in the interior of
$U'_{1,2}$ the coefficients $t_i$ and $s_i$ in $P$ are strictly
positive. Now consider a neighborhood $U$ of $P$. Given a matrix
$Q\in U$ we can write it in the form of Lemma \ref{inverse} by
computing the coefficients $t_i$ and the $s_i$. This is done by
solving linear systems of equations having the elements of $Q$ as
coefficients. Hence, it is possible to choose a suitable $U$ such
that for all the matrices in $U$ the coefficients $t_i$ and $s_i$
are strictly positive. In conclusion, the formulae of Lemma
\ref{inverse} produce a map $g_{1,2}:U\longrightarrow U'_{1,2}$.
It is straightforward to see that $g_{1,2}$ is a continuous map on
$U$ and that the map
\begin{equation*}
f_{1,2}\circ g_{1,2}
\end{equation*}
is the identity map. Now we take a neighborhood of $P'$, say
$U'\subset f^{-1}_{1,2}(U)$. Then we get a neighborhood of $P$
\begin{equation*}
g^{-1}_{1,2}(U')\subset\mathrm{Im}(f_{1,2})
\end{equation*}
and we are done.
\end{proof}

Lemma \ref{interior} shows that we only have to worry about points
of $\mathcal{M}$ which are images of boundary points of
$U_{j_1,j_2}$. Thus it is useful to have the following
description:

\begin{lemma}\label{boundary}
Let $P'\in U'_{j_1,j_2}$ be the point
\begin{equation*}
P'=(a_1,\ldots,a_{I-1},b_3,\ldots,b_{J},c_1,\ldots,c_{I-1},d_3,\ldots,d_{J},\alpha)
\end{equation*}
and let $P=f_{j_1,j_2}(P')$. Then the following hold:
\begin{enumerate}
\item\label{case1} if any of the coefficients $a_i$ or $c_i$ is
zero then $P$ is a point of the boundary of $\mathcal{M}$;

\item\label{case2} if $\sum a_i=1$ or $\sum b_i=1$ then $P$ is a
point of the boundary of $\mathcal{M}$;

\item if $\alpha=0$ or $\alpha=1$ then $P$ is a rank one matrix;

\item if any of the coefficients $b_i$ or $d_i$ is zero then is
$P$ has at least two proportional columns;

\item if $\sum a_i=1$ or $\sum b_i=1$ then $P$ has at least two
proportional columns.

\end{enumerate}
\end{lemma}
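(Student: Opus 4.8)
The plan is to reduce all five assertions to one explicit description of the rows and columns of $P=f_{1,2}(P')$ and then read each conclusion off that description. First I would fix notation: write $A=(a_1,\dots,a_{I-1},1-\sum a_i)^t$ and $C=(c_1,\dots,c_{I-1},1-\sum c_i)^t$ for the two column vectors (each a probability vector, since $P'\in U'_{1,2}$) and $r_1=(1-\sum b_i,0,b_3,\dots,b_J)$, $r_2=(0,1-\sum d_i,d_3,\dots,d_J)$ for the two row vectors, so that $P=\alpha A r_1^t+(1-\alpha)C r_2^t$. Then $\mathrm{col}_1(P)=\alpha(1-\sum b_i)A$, $\mathrm{col}_2(P)=(1-\alpha)(1-\sum d_i)C$, and $\mathrm{col}_j(P)=\alpha b_j A+(1-\alpha)d_j C$ for $j\ge 3$, while the last row of $P$ equals $\alpha(1-\sum a_i)\,r_1+(1-\alpha)(1-\sum c_i)\,r_2$. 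Every claim is visible in these formulae.

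For the boundary statements (1) and (2) I would exploit that $f_{1,2}$ is the restriction of a polynomial map defined on all of $\mathbb{R}^D$ which, for \emph{any} real arguments, returns a matrix of rank at most two with entry-sum one. If $a_{i_0}=0$ for some $i_0\le I-1$, then $p_{i_0,1}=\alpha(1-\sum b_i)a_{i_0}=0$; perturbing the single coordinate $a_{i_0}$ to $-\varepsilon$ makes $p_{i_0,1}$ strictly negative, so the resulting matrix leaves $\mathbb{R}_+^{I\times J}$ and hence $\mathcal{M}$. As these matrices converge to $P\in\overline{\mathcal{M}}$, we conclude $P\in\partial\mathcal{M}$. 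The case $c_{i_0}=0$ is identical via $p_{i_0,2}$, proving (1). For (2), $\sum a_i=1$ gives $p_{I,1}=\alpha(1-\sum b_i)(1-\sum a_i)=0$ and $\sum b_i=1$ gives $\mathrm{col}_1(P)=0$; the same one-parameter perturbation (pushing the relevant sum past $1$) produces a negative entry, so again $P\in\partial\mathcal{M}$.

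Assertion (3) is immediate from the formula for $P$: $\alpha=1$ gives $P=Ar_1^t$ and $\alpha=0$ gives $P=Cr_2^t$, each a single dyadic product of two nonzero vectors (both $A$ and the $r_h$ have entry-sum one), hence of rank one. For (4), if $b_{j_0}=0$ then $\mathrm{col}_{j_0}(P)=(1-\alpha)d_{j_0}C$ is a scalar multiple of $C$, and so is $\mathrm{col}_2(P)$, making these two columns proportional; the case $d_{j_0}=0$ is symmetric, producing two columns proportional to $A$. For (5) the operative mechanism is the vanishing of an \emph{entire} column: $\sum b_i=1$ forces $\mathrm{col}_1(P)=0$, and a zero column is proportional to every other column, so $P$ has two proportional columns (and, by the symmetry $a\leftrightarrow c$, $b\leftrightarrow d$, $\alpha\leftrightarrow1-\alpha$, likewise $\sum d_i=1$ forces $\mathrm{col}_2(P)=0$).

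The step I expect to need the most care is the meaning of ``$P$ lies on the boundary of $\mathcal{M}$'', since $\mathcal{M}$ is only a semialgebraic piece of the determinantal variety $\{\mathrm{rk}\le2\}\cap\Delta$; the perturbation argument above settles it cleanly, but only while the scalar coefficient multiplying the vanishing quantity is nonzero. The residual degenerate configurations---chiefly $\alpha\in\{0,1\}$, where one mixture component disappears---are precisely the rank-one points of (3) and are boundary points for that independent reason, so I would organize the write-up to dispose of these overlaps once rather than case by case. I would also recheck the literal hypothesis of (5): as stated it repeats ``$\sum a_i=1$ or $\sum b_i=1$'', yet $\sum a_i=1$ only yields the vanishing entry $p_{I,1}=0$ of (2) and not a proportionality, so the proportional-columns conclusion is genuinely driven by $\sum b_i=1$ together with its mirror image $\sum d_i=1$.
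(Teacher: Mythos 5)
Your proof is correct and takes essentially the same route as the paper's: for items (1)--(2) a vanishing coefficient forces a zero entry of $P$, so that nearby points of the rank-$\leq 2$, sum-one variety acquire negative entries and $P$ lies on the boundary, while items (3)--(5) are read off by direct computation from the column formulas. Your write-up is in fact more careful than the paper's very short proof: you ensure the perturbation stays inside the variety (by perturbing parameters rather than matrix entries), you isolate the degenerate configurations where the multiplying scalar vanishes (e.g.\ $\alpha\in\{0,1\}$, which are exactly the rank-one points of item (3)), and you correctly flag that the hypothesis ``$\sum a_i=1$'' in item (5) yields only the zero entry $p_{I,1}=0$ of item (2) and not proportional columns --- an apparent typo for ``$\sum d_i=1$'' that the paper's ``direct computations'' remark glosses over.
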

\begin{proof}
For  \eqref{case1} and \eqref{case2} it is enough to notice that
$P$ has some zero element. Hence a neighborhood of $P$ contains
matrices with negative entries. Thus $P$ is on the boundary of
$\mathcal{M}$. The other cases are obtained by direct
computations.
\end{proof}

By Lemma \ref{boundary} we see that the composite map $F\circ
f_{j_1,j_2}$ will detect maxima and minima of $F$ if these
extremal points do not have rank one or if they have rank two and
do not have two proportional columns. In many situation of
interest rank one matrices can be efficiently treated, e.g. for
the likelihood function. Rank two matrices with proportional
columns can be treated using our parametrization in a subtler way.

\begin{lemma}
Let $P=f_{j_1,j_2}(P'_{j_1,j_2})$ be a rank two matrix with at
least two proportional columns. Then a neighborhood of $P$ in
$\mathcal{M}$ can be covered using images of neighborhoods of
$P'_{j_1,j_2}$ in $U'_{j_1,j_2}$ for different pairs $(j_1,j_2)$.
\end{lemma}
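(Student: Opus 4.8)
The plan is to reduce the statement to a statement about the cone of columns of $P$ and to exploit the freedom in choosing which columns play the role of the reference pair in $f_{j_1,j_2}$. First I would record the geometric picture underlying Lemma~\ref{nonnegativelincomb}: since $P$ has rank exactly two and non-negative entries, its columns all lie in a two-dimensional convex cone whose extremal rays $\rho_1$ and $\rho_2$ are distinct and each spanned by at least one column of $P$. A point $P$ of the type in the statement is exactly one for which at least one extremal ray, say $\rho_1$, is spanned by two or more mutually proportional columns; this is the configuration that, by Lemma~\ref{boundary}, forces the preimage $P'_{j_1,j_2}$ to sit on the boundary of $U'_{j_1,j_2}$ whenever $(j_1,j_2)$ realizes the two extremal rays. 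If instead $P$ lies in the interior of some single image, Lemma~\ref{interior} already yields the conclusion with one pair, so this is the case to treat.

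Next I would set $S_1$ and $S_2$ to be the sets of column indices of $P$ lying on $\rho_1$ and on $\rho_2$ respectively; these are disjoint finite sets, and the relevant pairs are $(j_1,j_2)\in S_1\times S_2$. For each such pair the columns $C_{j_1},C_{j_2}$ span the extremal rays, so every column of $P$ is a non-negative combination of them and $P\in\mathrm{Im}(f_{j_1,j_2})$ as described in Lemma~\ref{inverse}. The core claim is then that the union $\bigcup_{(j_1,j_2)\in S_1\times S_2}\mathrm{Im}(f_{j_1,j_2})$ contains a whole neighbourhood of $P$ in $\mathcal{M}$. To prove it I would first shrink to a neighbourhood $N$ of $P$ in $\mathcal{M}$ on which every matrix has rank exactly two: this is legitimate because $\mathcal{M}$ consists of matrices of rank at most two while rank is lower semicontinuous, so no rank-one matrices accumulate at the rank-two point $P$.

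Then, for $Q\in N$, I would use that $Q$ again satisfies the hypotheses of Lemma~\ref{nonnegativelincomb}, hence lies in the image of the pair formed by its own two extremal columns. The key continuity step is to show that these extremal columns of $Q$ have indices in $S_1$ and $S_2$: a column of $P$ that is strictly interior to the cone remains strictly interior for $Q$ close to $P$, which is an open condition, so only columns already on $\rho_1$ or $\rho_2$ can realize an extremal ray of the perturbed cone, and by angular continuity one extremal ray of $Q$ is realized by an index in $S_1$ and the other by an index in $S_2$. This gives $Q\in\mathrm{Im}(f_{j_1,j_2})$ for a pair in $S_1\times S_2$, proving the claim. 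Finally, to phrase the conclusion in terms of neighbourhoods of the points $P'_{j_1,j_2}$ rather than the full images, I would invoke the continuous local inverse $g_{j_1,j_2}$ constructed in the proof of Lemma~\ref{interior}: its continuity sends $Q$ near $P$ to a preimage $Q'$ near $P'_{j_1,j_2}$, so the covering is realized by images of neighbourhoods of the $P'_{j_1,j_2}$.

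The step I expect to be the main obstacle is the perturbation analysis in the third paragraph: making rigorous that the extremal columns of every nearby rank-two matrix are drawn from the finitely many columns of $P$ sitting on $\rho_1\cup\rho_2$, so that a single finite family of pairs suffices. The delicate points are that different $Q$ may require different pairs, which is exactly why the union is needed and no single pair covers $N$, and that one must rule out a perturbation creating a new extremal direction not already present among the columns of $P$. Both are controlled by the openness of the ``strictly interior'' condition together with lower semicontinuity of rank, but they must be stated carefully.
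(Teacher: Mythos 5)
Your proof is correct and follows essentially the same route as the paper's: both rest on the extremal-ray picture behind Lemma \ref{nonnegativelincomb} to express every matrix near $P$ through two of its own (extremal) columns, combined with continuity of the explicit inverse formulas of Lemma \ref{inverse}, letting the reference pair $(j_1,j_2)$ vary over finitely many choices. The paper's proof simply asserts the key covering step (that the sets $f_{j_1,j_2}(W'_{j_1,j_2})$ cover a neighborhood of $P$ as the pair varies), so your perturbation analysis of which column indices can realize an extremal ray of a nearby matrix is exactly the detail the paper leaves implicit.
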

\begin{proof}
Given $P$, choose two independent columns, say the $j_1$-th and
the $j_2$-th. As $P$ has proportional columns, when written as in
Lemma \ref{inverse} some of the coefficients $t_i$ and $s_i$
vanish. Hence, in each neighborhood of $P$ there will be matrices
requiring negative values of the coefficients $t_i$ or $s_i$. Then
there is no neighborhood where the formulae of the Lemma can be
applied to get and inverse on $f_{j_1,j_2}$ and hence we can not
reproduce the argument of Lemma \ref{interior}. But we can find a
neighborhood of $P'_{j_1,j_2}$, say $W'_{j_1,j_2} \subset
U'_{j_1,j_2}$, such that there exists an inverse of $f_{j_1,j_2}$
on $f_{j_1,j_2}(W'_{j_1,j_2})$, but this is not a neighborhood of
$P$. By Lemma \ref{nonnegativelincomb} we see that the
$f_{j_1,j_2}(W'_{j_1,j_2})$ cover a neighborhood of $P$ as
$(j_1,j_2)$ varies and we are done.
\end{proof}

We can now describe our strategy. Given a function
$F:\mathcal{M}\longrightarrow\mathbb{R}$ we can look for maxima
and minima of $F$ in following way:
\begin{enumerate}
\item study $F$ on rank one matrices using an ad hoc method. When
$F$ is the likelihood function, the problem is quite simple, see
e.g. \cite{agresti:02};

\item\label{step1} consider the functions $F\circ f_{j_1,j_2}$ and
compute their maxima and minima on $U'_{j_1,j_2}$ for all $1\leq
j_1<j_2 \leq J$ (notice that these computation are as simple as
they could be as the least number of variable is involved); let
$Q$ be one of the point we found;

\item if $Q$ is in the interior of one of the $U'_{j_1,j_2}$ then
$f_{j_1,j_2}(Q)$ is a maximum or minimum of $F$;

\item if $Q$ lies on the boundary of one of the $U'_{j_1,j_2}$ and
$f_{j_1,j_2}(Q)$ is on the boundary of $\mathcal{M}$, then
$f_{j_1,j_2}(Q)$ is a maximum or minimum of $F$;

\item if $Q$ lies on the boundary of one of the $U'_{j_1,j_2}$ and
$f_{j_1,j_2}(Q)$ has rank one we already treated this case in the
first step;

\item if $Q$ lies on the boundary of one of the $U'_{j_1,j_2}$ and
$f_{j_1,j_2}(Q)$ has two proportional columns, then $Q$ will lie
on the boundary of at least two of the $U'_{j_1,j_2}$; for each
each pair $(j_1,j_2)$ such that $Q$ is on the boundary of
$U'_{j_1,j_2}$ we have to compare the extremal behavior of the
functions $F\circ f_{j_1,j_2}$, if these behavior agree then
$f_{j_1,j_2}(Q)$ is a maximum/minimum of $F$ otherwise it is not.
\end{enumerate}

In this paper we only considered matrices of rank at most two. For
higher values of the rank the situation gets much more involved
and almost impossible to treat. For example, it is not even known
how to effectively compute the non-negative rank of a matrix. But,
some preliminary results in \cite{dong|lin|chu:09} suggest that
matrices with non-negative rank different from the ordinary rank
are exceptional, i.e. they form a zero-measure set. This
observation can be of some help to try and extend our approach.


\end{document}